\renewcommand{\d}{\mathrm{d}}
\newcommand{\D}{\mathrm{D}}
\newcommand{\cD}{\mathfrak{D}}
\newcommand{\cE}{\mathcal{E}}
\newcommand{\cJ}{\mathcal{J}}
\newcommand{\cL}{\mathcal{L}}
\newcommand{\der}[2]{\frac{\partial {#1}}{\partial {#2}}}
\newcommand{\var}[3]{\frac{\delta_{#1} {#2}}{\delta {#3}}}
\newcommand{\C}{\mathbb{C}}
\newcommand{\N}{\mathbb{N}}
\newcommand{\R}{\mathbb{R}}
\DeclareMathOperator{\pr}{pr}
\newtheorem{theorem}{Theorem}
\newtheorem{lemma}[theorem]{Lemma}
\theoremstyle{definition}
\newtheorem{example}{Example}
\begin{document}

\title{Variational symmetries and pluri-Lagrangian structures \\
for integrable hierarchies of PDEs}
\author{Matteo Petrera and Mats Vermeeren}
\affil{
Institut f\"ur Mathematik, MA 7-1 

Technische Universit\"at Berlin, Str. des 17. Juni 136,
10623 Berlin, Germany

\medskip
\texttt{petrera@math.tu-berlin.de}, \texttt{vermeeren@math.tu-berlin.de}
}
\date{}

\maketitle

\begin{abstract}
We investigate the relation between
pluri-Lagrangian hierarchies of $2$-dimensional partial differential equations and their variational symmetries. The aim is to generalize to the case of partial differential equations the recent findings in [Petrera, Suris. J.\@ Nonlinear Math.\@ Phys.\@ 24:sup1, 121--145 (2017)] for ordinary differential equations. We consider hierarchies of $2$-dimensional Lagrangian PDEs (many of which have a natural $(1+1)$-dimensional space-time interpretation) and show that if the flow of each PDE is a variational symmetry of all others, then there exists a pluri-Lagrangian 2-form for the hierarchy. The corresponding multi-time Euler-Lagrange equations coincide with the original system supplied with commuting evolutionary flows induced by the variational symmetries. 
\medskip

\noindent 
\textbf{MSC2010:} 
37K05, 37K10
\smallskip

\noindent
\textbf{Keywords:} Integrable PDEs, Variational principles, Variational symmetries
\end{abstract}
\section{Introduction}

In the last decade a variational perspective on integrable systems has emerged under the name of \emph{pluri-Lagrangian} systems (or \emph{Lagrangian multiform} systems). The theory was initiated in \cite{LN1} in the discrete setting, more specifically in the context of multidimensionally consistent lattice equations on a quadrilateral stencil, called \emph{quad equations}. Multidimensional consistency means that the equation can be imposed on all elementary squares in a higher-dimensional lattice without leading to contradictions. Analogous to commutativity of differential equations, multidimensional consistency is a key feature of integrability for difference equations.

In \cite{LN1} it was shown that the property of multi-dimensional consistency can be combined with a variational formulation for quad equations. Solutions of integrable quad equations are critical points of an action functional obtained by integrating a suitable discrete Lagrangian 2-form over an arbitrary 2-dimensional surface in a higher-dimensional lattice. If the 2-dimensional surface is a plane, we recover a traditional variational principle for a 2-dimensional difference equation where the action is the sum over a plane of evaluations of the Lagrange function. The pluri-Lagrangian property requires the action to be critical also when this plane is replaced by any other 2-dimensional discrete surface in a higher-dimensional lattice.
This remarkable property has been considered as a defining feature of integrability of 2-dimensional discrete equations \cite{LN1, LN2,LN3,ALN,XLN,BS,BPS2,BPS5} as well as in the 1-dimensional \cite{YLN,BPS1,BPS3} and 3-dimensional \cite{LNQ,BPS4} cases.

The pluri-Lagrangian property can also be formulated in the continuous case, where solutions of (hierarchies of) integrable 2-dimensional partial differential equations (PDEs) are critical points of an action functional obtained by integrating a differential 2-form over an arbitrary 2-dimensional surface in a higher-dimensional space. This variational principle has been proposed as a Lagrangian analogue of the existence of  Poisson-commuting Hamilton functions \cite{S2,SV,LN1,XLN}. As in the discrete case, it is not limited to Lagrangian 2-forms describing 2-dimensional PDEs. The corresponding variational principle where a Lagrangian 1-form is integrated over curves applies to integrable ordinary differential equations \cite{S1,petrera2017variational,YLN}. It is conjectured that also for $d > 2$ integrable hierarchies of $d$-dimensional integrable PDEs can be described by pluri-Lagrangian $d$-forms.

Thanks to these investigations a quite suggestive scenario has emerged: the pluri-Lagrangian structure is closely related (or even equivalent) to the integrability of the underlying system. This novel characterization of integrability applies to both ordinary differential (or difference) equations and partial differential (or difference) equations. 

In the recent paper \cite{petrera2017variational} a connection between the notions of pluri-Lagrangian structures and variational symmetries was proved in the context of classical mechanics. In particular, it was shown that the existence of commuting variational symmetries for a system of variational ordinary differential equations leads to a natural pluri-Lagrangian 1-form, whose multi-time Euler-Lagrange equations consist of the original system
and commuting flows corresponding to the variational symmetries. These findings confirmed, in the framework of classical mechanics, that a pluri-Lagrangian structure is hidden behind the existence of a sufficient number of variational symmetries (i.e., of integrals of motion thanks to Noether theorem).

In the present work we extend the above idea to the case of variational $2$-dimensional PDEs, thus generalizing the results of \cite{petrera2017variational} to the context of Lagrangian field theory with two independent variables. We consider hierarchies of variational PDEs where the flow of each PDE is a variational symmetry of the Lagrange functions of all other members of the hierarchy. Under this assumption, we show that there exists a pluri-Lagrangian 2-form for the hierarchy.

The paper is organized as follows.
In Section \ref{sec-Noether} we give a short overview of Lagrangian field theory, recalling some classical notions and definitions. In particular we will provide a formulation of the celebrated Noether theorem, which establishes the  relation between conservation laws and variational symmetries.
In Section \ref{sec-pluri} we review the notion of continuous $2$-dimensional pluri-Lagrangian systems. 
Section \ref{sec-construction} is devoted to new results. It will be proved that from a family of variational symmetries one can construct a pluri-Lagrangian structure.
The final Section \ref{sec-examples}
contains three examples which illustrate the theoretical results obtained in Section \ref{sec-construction}.

\section{A short review of Lagrangian field theory}
\label{sec-Noether}

An exhaustive reference on classical Lagrangian field theory is the book of  P.J.\@ Olver \cite{Olver}. The scope of the present Section is to
 recall the main definitions and concepts needed for a self-contained presentation of our results in the next Sections.

\subsection{Euler-Lagrange equations}

Since we will work in a multi-time setting we do not restrict our presentation here to fields depending on only two independent variables. Therefore we start by considering
a smooth field $u: \R^N \to \R$ depending on $N$ real independent variables 
$t_1,\dots, t_N$. 

We will use the multi-index notation for partial derivatives. For any multi-index $I=(i_1,\ldots,i_N) \in \N^N$ we set
\[ 
u_I = \frac{\partial^{|I|} u}{(\partial t_1)^{i_1} \ldots (\partial t_N)^{i_N}},
\]
where $|I| = i_1 + \ldots + i_N$ and $u = u(t_1,\ldots,t_N)$. The notations
$It_k$ and $It_k^\alpha$ will represent the multi-indices $(i_1,\ldots,i_k + 1, \ldots i_N)$ and $(i_1,\ldots,i_k + \alpha, \ldots i_N)$ respectively. We will write $k \not\in I$ if $i_k = 0$ and $k \in I$ if $i_k \neq 0$. 

We will denote by $\D_i$ the total derivative with respect to the coordinate direction $t_i$,
\[ 
\D_i = \sum_{I \in \N^N} u_{I t_i} \der{}{u_I}
\]
and by $\D_I = \D_{1}^{i_1} \ldots \D_{N}^{i_N}$ the corresponding higher order derivatives.

The field $u$ can be considered as a section of the trivial bundle $\R^N \times \R$. The partial derivatives of $u$ of any order span the infinite jet bundle associated with $\R^N \times \R$. We will denote the fiber of the infinite jet bundle by $\cJ^\infty$ and the fiber coordinates by \[[u]=(u,u_{t_i},u_{t_it_j},\ldots)_{i,j,\ldots \in \{1,\ldots,N\}}.\]

A {\emph {variational problem}} for a smooth field $u: \R^N \rightarrow \R$ is described by a {\emph {Lagrangian}} $L: \cJ^\infty \rightarrow \R$ and consists
in finding the critical points of the
{\emph {action functional}}
\[
S = \int_\Gamma L[u] \, \d t_1 \wedge \cdots \wedge \d t_N ,
\]
where $\Gamma \subset \R^N$ is some bounded region. In other words, we look for fields $u$ such that for all fields $v$ such that $v$ and its derivatives vanish at the boundary of $\Gamma$, there holds
\[ \frac{\d}{\d \varepsilon}\bigg|_{\varepsilon = 0} \int_\Gamma L[u + \varepsilon v]\, \d t_1 \wedge \cdots \wedge \d t_N  = 0. \]

Concretely, we will be interested in variational problems for fields $u: \R^2 \rightarrow \R$. Therefore, let us fix $N=2$ and write explicitly the variational equations governing the evolution of $u$. In this case
the action functional over some bounded region $\Gamma \subset \R^2$ is
\begin{equation} \label{action}
S = \int_\Gamma L[u] \, \d t_1 \wedge \d t_2 .
\end{equation}
The field $u$ is a solution to the variational problem, i.e.,\@ a critical point for the action $S$, if and only if
\begin{equation} \label{varpde}
\var{}{L}{u} = 
\sum_{\alpha,\beta \geq 0} (-1)^{\alpha+\beta} \D_1^\alpha \D_2^\beta \!\left( \der{L}{u_{t_1^\alpha t_2^\beta}} \right)=
0,
\end{equation}
where the left hand side is called the \emph{variational derivative} of $L$.
Equation (\ref{varpde}) gives rise to a variational PDE, called \emph{Euler-Lagrange equation}. Note that if the Lagrangian depends on the $n$-th order jet, i.e.,\@ on derivatives of $u$ up to order $n$, then the Euler-Lagrange equation depends on the jet of order $2n$. If a given $2$-dimensional PDE can be written as in Equation (\ref{varpde}) for some Lagrangian $L$, then we say that this PDE has a variational (or Lagrangian) structure.

Of course, the Euler-Lagrange equation (\ref{varpde}) admits a straightforward generalization for the case of a field $u: \R^N \rightarrow \R$ for $N > 2$.

\begin{example}\label{ex-KdV-Lag}
The Korteweg-de Vries (KdV) equation
\[ w_2 = w_{111} + 6 w w_1, \]
where $w_i$ is shorthand notation for the derivative $w_{t_i}$,
can be put into a variational form by introducing the potential $u = w_1$. The corresponding equation is
\[ u_{12} = u_{1111} + 6 u_1 u_{11}. \]
Its variational structure comes from the Lagrangian
\[
L[u]= \frac{1}{2} u_1 u_2 - u_1^3 - \frac{1}{2} u_1 u_{111}.
\]
Indeed, critical points of the action (\ref{action})
are characterized by the Euler-Lagrange equation
\begin{align*}
0 = \frac{\delta L}{\delta u} 
&= - \D_1 \der{L}{u_1} - \D_1^3 \der{L}{u_{111}}
- \D_2 \der{L}{u_2} \\
&= \left( - \frac{1}{2} u_{12} + 6 u_1 u_{11} + \frac{1}{2} u_{1111}\right) + \frac{1}{2} u_{1111} - \frac{1}{2} u_{12} \\
&=-u_{12} + u_{1111} + 6 u_1 u_{11}.
\end{align*}
\end{example}

\subsection{Variational symmetries and Noether's theorem}

Let $N=2$.
A vertical generalized vector field on $\R^2 \times \R$ is a vector field of the form $Q \partial_u$, where $Q: \cJ^\infty \rightarrow \R$. It is called \emph{vertical} because it does not contain any $\partial_{t_i}$ and \emph{generalized} because $Q$ depends on derivatives of $u$, not just on $u$ itself. The {\emph {prolongation}} of $Q \partial_u$ is a  vector field on $\cJ^\infty$ defined as
\[ \pr (Q \partial_u) = \sum_{I \in \N^2} (\D_I Q) \der{}{u_I}. \]

A vector field $Q \partial_u$ is called a {\emph {variational symmetry}} of a Lagrangian $L:\cJ^\infty \rightarrow \R$
if its prolongation $\pr (Q \partial_u)$ satisfies
\begin{equation}\label{varsym}
\pr (Q \partial_u) L = \D_1 F_1 + \D_2 F_2
\end{equation}
for some functions $F_1,F_2: \cJ^\infty \rightarrow \R$. The pair $(F_1,F_2)$ is called the \emph{flux} of the variational symmetry.

A {\emph {conservation law}} for $L$ is a triple of functions $J_1,J_2,Q: \cJ^\infty \rightarrow \R$ that satisfy
\begin{equation}\label{conservation}
 \D_1 J_1 + \D_2 J_2 = -Q \var{}{L}{u} .
\end{equation}
If Equation (\ref{conservation}) holds true, the pair $J = (J_1,J_2)$ is called the \emph{conserved current} and $Q$ the \emph{characteristic} of the conservation law. On solutions of the Euler-Lagrange equations (\ref{varpde}) the conserved current $J$ is divergence-free, hence its name.

The famous Noether's theorem \cite{N} establishes a one-to-one correspondence between conservation laws and variational symmetries. 

\begin{theorem}
\label{thm-noether}
Let $Q \partial_u$ be a variational symmetry of $L$. Then
\begin{align}
    J_1[u] &= \sum_{I \not\ni t_2} \left( (\D_I Q) \var{}{L}{u_{It_1}} \right) + \frac{1}{2} \sum_I \D_2 \left( (\D_I Q) \var{}{L}{u_{It_1t_2}} \right) - F_1[u] , \label{noether1}\\
    J_2[u] &= \sum_{I \not\ni t_1} \left( (\D_I Q) \var{}{L}{u_{It_2}} \right) + \frac{1}{2} \sum_I \D_1 \left( (\D_I Q) \var{}{L}{u_{It_1t_2}} \right) - F_2[u],
    \label{noether2}
\end{align}
define the components of the conserved current of a conservation law, where the pair of functions $(F_1,F_2)$ is the flux, as in Equation \eqref{varsym}.

Conversely, given a conserved current $(J_1,J_2)$, Equations \eqref{noether1} and \eqref{noether2} define the flux $(F_1,F_2)$ of a variational symmetry.
\end{theorem}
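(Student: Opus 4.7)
The plan is to establish the purely algebraic identity
\[
\pr(Q\partial_u)\,L = Q\,\var{}{L}{u} + \D_1 R_1 + \D_2 R_2,
\]
where $R_1, R_2$ denote the expressions on the right-hand sides of \eqref{noether1} and \eqref{noether2} \emph{before} the subtraction of $F_1, F_2$. Once this identity is in hand, the theorem is immediate. Combining it with the variational-symmetry condition $\pr(Q\partial_u) L = \D_1 F_1 + \D_2 F_2$ gives
\[
\D_1(R_1 - F_1) + \D_2(R_2 - F_2) = -Q\,\var{}{L}{u},
\]
which is exactly the conservation law \eqref{conservation} with $J_k = R_k - F_k$. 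The converse direction is then the reverse reading: given a conserved current $(J_1,J_2)$ satisfying \eqref{conservation}, one defines $F_k := R_k - J_k$, and the identity above forces $\pr(Q\partial_u)L = \D_1 F_1 + \D_2 F_2$, so $Q\partial_u$ is a variational symmetry with flux $(F_1,F_2)$.

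The identity itself is proved by starting from the definitional expansion $\pr(Q\partial_u) L = \sum_{I \in \N^2} (\D_I Q)\,\der{L}{u_I}$ and integrating by parts term by term so as to transfer the derivatives off $Q$ onto the coefficients $\der{L}{u_I}$. Iterating the Leibniz rule, each summand decomposes as
\[
(\D_I Q)\,P = (-1)^{|I|} Q\,\D_I P + \D_1 A_I + \D_2 B_I,
\]
for suitable $A_I, B_I$ depending bilinearly on the derivatives of $Q$ and of $P$. Summing over $I$, the bulk terms collect to $Q\,\var{}{L}{u}$ by the very definition \eqref{varpde} of the variational derivative, while the remainder aggregates into $\D_1 R_1 + \D_2 R_2$.

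The subtlety is that the splitting between $\D_1$ and $\D_2$ is unique only up to a curl term $(\D_2 H,-\D_1 H)$, so one must fix a canonical choice. For multi-indices $I$ involving only $t_1$ (resp.\@ only $t_2$) the natural choice pushes everything into $\D_1$ (resp.\@ $\D_2$), which produces the first sums in \eqref{noether1} and \eqref{noether2} — the conditions $I\not\ni t_2$ and $I\not\ni t_1$ select exactly these indices. For mixed multi-indices $I$ containing both $t_1$ and $t_2$, the two one-sided integration-by-parts schemes are genuinely different, and averaging them yields both the symmetric prefactor $\tfrac{1}{2}$ and the outer $\D_2$ (resp.\@ $\D_1$) acting on $(\D_I Q)\,\var{}{L}{u_{It_1t_2}}$ that appear in the second sums. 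The main obstacle is verifying, by induction on $|I|$, that this symmetric bookkeeping aggregates precisely to $(R_1,R_2)$ with no mismatched boundary pieces; the higher Euler operators used systematically in \cite{Olver} provide a clean organizational framework for the calculation and make the match with \eqref{noether1} and \eqref{noether2} transparent.
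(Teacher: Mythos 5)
Your proposal is correct and follows essentially the same route as the paper: both reduce the theorem to the purely algebraic identity $\pr(Q\partial_u)L = Q\,\var{}{L}{u} + \D_1 R_1 + \D_2 R_2$ (from which the forward and converse statements are immediate), and both obtain the factor $\tfrac12$ by splitting the mixed term $\D_1\D_2\bigl((\D_I Q)\var{}{L}{u_{It_1t_2}}\bigr)$ symmetrically between the two total derivatives. The only difference is bookkeeping: where you propose a term-by-term Leibniz expansion with an induction on $|I|$ organized via higher Euler operators, the paper gets the identity in one stroke from the inclusion--exclusion formula $\der{L}{u_I} = \var{}{L}{u_I} + \D_1\var{}{L}{u_{It_1}} + \D_2\var{}{L}{u_{It_2}} + \D_1\D_2\var{}{L}{u_{It_1t_2}}$, followed by reindexing the sum over $I$ according to whether $I$ contains $t_1$ and/or $t_2$.
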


Note that Equations \eqref{noether1} and \eqref{noether2} contain variational derivatives with respect to partial derivatives of $u$:
\[ \var{}{L}{u_{I}} = \sum_{\alpha,\beta \ge 0} (-1)^{\alpha+\beta} \D_1^\alpha \D_2^\beta \der{L}{u_{I t_1^\alpha t_2^\beta}}. \]

We also observe that  $J_1$ and $J_2$ can be alternatively written as
\begin{align*}
    J_1[u] &= \sum_{\alpha \geq 0} \left( (\D_1^\alpha Q) \var{}{L}{u_{t_1^{\alpha+1}}} \right) + \frac{1}{2} \sum_{\alpha \geq 0} \sum_{\beta \geq 0} \D_2 \left( (\D_1^\alpha \D_2^\beta Q) \var{}{L}{u_{t_1^{\alpha+1}t_2^{\beta+1}}} \right) - F_1[u], \\
    J_2[u] &= \sum_{\beta \geq 0} \left( (\D_2^\beta Q) \var{}{L}{u_{t_2^{\beta+1}}} \right) + \frac{1}{2} \sum_{\alpha \geq 0} \sum_{\beta \geq 0} \D_1 \left( (\D_1^\alpha \D_2^\beta Q) \var{}{L}{u_{t_1^{\alpha+1}t_2^{\beta+1}}} \right) - F_2[u],
\end{align*}

\begin{proof}[Proof of Theorem \ref{thm-noether}]
The key point of the proof consists in the integration by parts of
\[ \pr (Q \partial_u) L
= \sum_I (\D_I Q) \der{L}{u_I}, \]
i.e.,\@ to write it in the form
\[ \pr (Q \partial_u) L
= Q \var{}{L}{u} + \D_1(\cdots) + \D_2(\cdots). \]

To perform the full calculation, observe that
\[ \der{L}{u_I} =  \var{}{L}{u_I} + \D_1 \var{}{L}{u_{It_1}} + \D_2 \var{}{L}{u_{It_2}} + \D_1  \D_2 \var{}{L}{u_{It_1t_2}}, \]
hence
\begin{align*}
\pr (Q \partial_u) L
&= \sum_I (\D_I Q) \left( \var{}{L}{u_I} + \D_1 \var{}{L}{u_{It_1}} + \D_2 \var{}{L}{u_{It_2}} + \D_1  \D_2 \var{}{L}{u_{It_1t_2}} \right) \\
&= \sum_I \big( (\D_{It_1t_2} Q) + (\D_{It_2} Q) \D_1 + (\D_{It_1} Q) \D_2 + (\D_I Q) \D_1 \D_2 \big) \var{}{L}{u_{It_1t_2}} \\
&\quad + \sum_{I \not\ni t_2} \big( (\D_{It_1} Q) + (\D_I Q) \D_1 \big) \var{}{L}{u_{It_1}}  \\
&\quad + \sum_{I \not\ni t_1} \big( (\D_{It_2} Q) + (\D_I Q) \D_2 \big) \var{}{L}{u_{It_2}} 
+ Q \var{}{L}{u} ,
\end{align*}
where the last term would be a sum over all $I \not\ni t_1,t_2$, but only the empty multi-index $I = (0,0)$ satisfies this condition. The above equation can be simplified as
\begin{align*}
\pr (Q \partial_u) L
&= \sum_I \D_1 \D_2 \left( (\D_I Q) \var{}{L}{u_{It_1t_2}} \right) 
\\
&\quad 
+ \sum_{I \not\ni t_2} \D_1 \left( (\D_I Q) \var{}{L}{u_{It_1}} \right) + \sum_{I \not\ni t_1} \D_2 \left( (\D_I Q) \var{}{L}{u_{It_2}} \right) +  Q \var{}{L}{u} \\
&= \D_1 (J_1 + F_1) + \D_2 (J_2 + F_2) + Q \var{}{L}{u} .
\end{align*}
It follows that Equations \eqref{varsym} and \eqref{conservation} are equivalent. Hence if $Q \partial_u$ is a variational symmetry, then Equations \eqref{noether1}--\eqref{noether2} define a conserved current.
\end{proof}

\begin{example}
\label{ex-KdV-sym}
Consider again the KdV equation
\[ u_{12} = u_{1111} + 6 u_1 u_{11} \]
and its Lagrangian
\[
L[u]= \frac{1}{2} u_1 u_2 - u_1^3 - \frac{1}{2} u_1 u_{111}.
\]
As before, indices  denote derivatives with respect to the corresponding time variables, e.g.\@ $u_{12} = u_{t_1 t_2}$. We present two variational symmetries of this equation and their associated conservation laws:

\begin{enumerate}[$(a)$]
\item 
The generalized vector field $Q \partial_u$ with $Q[u] = u_1$ corresponds to a translation in the $t_1$-direction. Indeed,
\[
\pr(Q \partial_u) L = u_1 \der{L}{u} + u_{11} \der{L}{u_1} + u_{111} \der{L}{u_{11}} + u_{1111} \der{L}{u_{111}} + u_{12} \der{L}{u_2} \\
 = \D_1 L,
\]
hence $Q \partial_u$ is a variational symmetry with flux
\[
(F_1[u], F_2[u]) = (L[u],0). 
\]

Corresponding to this variational symmetry we find the conservation law
\[
-Q[u] \var{}{L}{u} = -u_1 ( -u_{12} + 6 u_1 u_{11} + u_{1111})  = \D_1 J_1 + \D_2 J_2,
\]
with
\begin{align*}
J_1[u] &= u_1 \var{}{L}{u_1} + u_{11} \var{}{L}{u_{11}}  + u_{111} \var{}{L}{u_{111}} - F_1[u] = -2 u_1^3 - u_1 u_{111} + \frac{1}{2} u_{11}^2, \\
J_2[u] &= u_1 \var{}{L}{u_2} - F_2[u] = \frac{1}{2} u_1^2.
\end{align*}
This in turn implies the conservation of momentum:
\[ \D_2 \int \frac{1}{2} u_1^2 \,\d t_1 = 0 . \]

\item The generalized vector field  $Q \partial_u$ with
\[ Q[u] = 10 u_1^3 + 5 u_{11}^2 + 10 u_1 u_{111} + u_{11111}. \]
Indeed,
\begin{align*}
\pr(Q \partial_u) L &= Q \der{L}{u} + (\D_1 Q) \der{L}{u_1} +  (\D_1^2 Q) \der{L}{u_{11}} +  (\D_1^3 Q) \der{L}{u_{111}} +  (\D_2 Q) \der{L}{u_2} \\
&= \D_1 F_1+ \D_2 F_2,
\end{align*}
with
\begin{align*}
F_1[u] &= -18 u_{1}^{5} - 15 u_{1}^{2} u_{11}^{2} - 45 u_{1}^{3} u_{111} + 5 u_{1}^{3} u_{2} + 4 u_{11}^{2} u_{111} - 18 u_{1} u_{111}^{2} - 4 u_{1} u_{11} u_{1111} \\
&\quad - 8 u_{1}^{2} u_{11111} - 10 u_{1} u_{11} u_{12} + \frac{5}{2} u_{11}^{2} u_{2} + 5 u_{1} u_{111} u_{2} + \frac{1}{2} u_{1111}^{2} - u_{111} u_{11111} \\
&\quad + \frac{1}{2} u_{11} u_{111111} - \frac{1}{2} u_{1} u_{1111111} + u_{111} u_{112} - u_{1111} u_{12} + \frac{1}{2} u_{11111} u_{2},
\\
F_2[u] &= \frac{5}{2} u_1^{4} + \frac{15}{2} u_1 u_{11}^{2} + 5 u_1^{2} u_{111} - \frac{1}{2} u_{111}^{2} + \frac{1}{2} u_{1} u_{11111}
 .
\end{align*}
The corresponding conservation law is
\[
-Q[u] \frac{\delta L}{\delta u} 
= \D_1 J_1 + \D_2 J_2,
\]
with
\begin{align*}
J_1[u] &= Q \var{}{L}{u_1} + (\D_1 Q) \var{}{L}{u_{11}}  + (\D_{11} Q)\var{}{L}{u_{111}} - F_1 \\
&= -12 u_{1}^{5} - 15 u_{1}^{2} u_{11}^{2} - 10 u_{1}^{3} u_{111} + u_{11}^{2} u_{111} - 2 u_{1} u_{111}^{2} - 6 u_{1} u_{11} u_{1111} \\
&\quad + 10 u_{1} u_{11} u_{12} - \frac{1}{2} u_{1111}^{2} - u_{111} u_{112} + u_{1111} u_{12}
\end{align*}
and 
\[
J_2[u] = Q \var{}{L}{u_2} - F_2[u] 
= \frac{5}{2} u_{1}^{4} - 5 u_{1} u_{11}^{2} + \frac{1}{2} u_{111}^{2} .
\]
\end{enumerate}
\end{example}

\section{Pluri-Lagrangian field theory}
\label{sec-pluri}

In this Section we briefly review the main concepts of pluri-Lagrangian field theory. For further details see \cite{S1,S2,SV}.

\subsection{Integrable hierarchies of PDEs}

One of the defining features of an integrable PDE is that it possesses an infinite amount of symmetries and, correspondingly, an infinite amount of conservation laws. These symmetries define a family of PDEs that commute with the original one. 

Let us illustrate the concept of commuting PDEs on the basis of our leading example.

\begin{example}
In Example \ref{ex-KdV-sym}$(b)$ we proved that the generalized vector field
$Q \partial_u$, with
\[
Q[u] = 10 u_1^3 + 5 u_{11}^2 + 10 u_1 u_{111} + u_{11111},
\]
is a variational symmetry of the KdV equation
\[
u_{12} = u_{1111} + 6 u_1 u_{11} = 0.
\]
If we introduce a third independent variable $t_3$, we can define the PDE
\[ u_3 = 10 u_1^3 + 5 u_{11}^2 + 10 u_1 u_{111} + u_{11111}, \]
which commutes with the KdV equation itself. This means that both ways of calculating the mixed derivative $u_{123}$ agree on solutions:
\begin{align*}
\D_3 u_{12} 
&= \D_3 (u_{1111} + 6 u_1 u_{11}) \\
&= 540 u_{1}^{2} u_{11}^{2} + 180 u_{1}^{3} u_{111} + 480 u_{11}^{2} u_{111} + 300 u_{1} u_{111}^{2} + 480 u_{1} u_{11} u_{1111} + 90 u_{1}^{2} u_{11111} \\
&\quad + 70 u_{1111}^{2} + 110 u_{111} u_{11111} + 56 u_{11} u_{111111} + 16 u_{1} u_{1111111} + u_{111111111} \\
&= \D_1 \D_2 \left(10 u_1^3 + 5 u_{11}^2 + 10 u_1 u_{111} + u_{11111} \right) \\
&= \D_1 \D_2 u_3.
\end{align*}
\end{example}

Since symmetries lead to commuting equations, a natural perspective on an integrable PDE is to consider it as one equation belonging to an infinite {\emph{integrable hierarchy}}, i.e., an infinite set of integrable PDEs such that any two systems in this set are compatible. Such hierarchies are usually generated by recursion operators or master symmetries \cite{newell1985solitons,dickey2003soliton,Olver}.

\subsection{Pluri-Lagrangian problems}

Let us focus  on $(1+1)$-dimensional PDEs. A finite number of equations from a hierarchy can be embedded in a higher-dimensional multi-time, where they share a common space direction, say $t_1 = x$, but each equation has its own time coordinate, $t_2,t_3,\ldots$. Formally, we can embed the whole hierarchy into an infinite-dimensional space in the same way.
 
In the classical variational description of $(1+1)$-dimensional PDEs, we integrate a Lagrange function over (an open subset of) the 2-dimensional space-time. A variational structure of a hierarchy of such PDEs should include the classical variational description of each individual equation, i.e., integration over a 2-dimensional subspace. Therefore, it is natural for the role of a Lagrange function to be played by a differential 2-form.

Let $\cL \in \Omega^2(\R^N)$ be a 2-form depending on the infinite jet of a smooth field $u: \R^N \rightarrow \R$, i.e.,
\begin{equation}\label{2form}
\cL[u] = \sum_{i < j} L_{ij}[u] \,\d t_i \wedge \d t_j,
\end{equation}
with $L_{ij}:\cJ^\infty \rightarrow \R$.
We say that $u$ solves the \emph{pluri-Lagrangian problem} for $\cL$ if for any 2-dimensional submanifold $\Gamma \subset \R^N$ and for any infinitesimal variation $v(t_1,\ldots,t_N) \partial_u$ of $u$, where $v:\R^N \rightarrow \R$ and all its derivatives vanish at the boundary of $\Gamma$, we have
\[ \frac{\d}{\d \varepsilon} \bigg|_{\varepsilon = 0} \int_\Gamma \cL[u + \varepsilon v]= 0 . \]
This can also be written as
\[ \int_\Gamma \pr(v \partial_u) \cL[u] = 0 ,\]
where the vertical vector field $\pr(v \partial_u) = \sum_I v_I \der{}{u_I}$ acts on the coefficients of $\cL[u]$, i.e.
\[ \pr(v \partial_u) \cL[u] = \sum_{i < j} \sum_I v_I \der{L_{ij}[u]}{u_I} \,\d t_i \wedge \d t_j . \]

The equations that characterize solutions to the pluri-Lagrangian problem are called \emph{multi-time Euler-Lagrange equations}. They were derived in \cite{S2} and state that, for all $i,j,k \in \{1,\ldots,N\}$, there holds:
\begin{align}
    \forall I \not\ni t_i,t_j : &\quad \var{ij}{L_{ij}}{u_I} = 0, \label{EL1} \\
    \forall I \not\ni t_i: &\quad \var{ij}{L_{ij}}{u_{It_j}} = \var{ik}{L_{ik}}{u_{It_k}} , \label{EL2} \\
     \forall I: &\quad \var{ij}{L_{ij}}{u_{It_it_j}} + \var{jk}{L_{jk}}{u_{It_jt_k}} + \var{ki}{L_{ki}}{u_{It_kt_i}} = 0 , \label{EL3}
\end{align}
where 
\begin{equation}\label{varder-general}
\var{ij}{L_{ij}}{u_{I}} = \sum_{\alpha,\beta \ge 0} (-1)^{\alpha+\beta} \D_i^\alpha \D_j^\beta \!\left( \der{L_{ij}}{u_{I i^\alpha j^\beta}} \right)
\end{equation}
is the variational derivative in the $(t_i,t_j)$-plane. Note that the multi-time Euler-Lagrange equations contain the classical Euler-Lagrange equations in each $(t_i,t_j)$-plane \eqref{EL1}, where derivatives with respect to other times are considered as additional components of the field, plus additional equations \eqref{EL2}--\eqref{EL3} coming from choices of $\Gamma$ that are not coordinate planes. 

In the present work, we will use a different property to recognize solutions to the pluri-Lagrangian problem. There is a remarkable relation between the pluri-Lagrangian problem and the property that the 2-form $\cL$ is closed on solutions $u$ to the hierarchy. In fact, this closedness property is often considered to be the fundamental property of the Lagrangian theory of integrable hierarchies \cite{LN1, LN2,LN3, LNQ, YLN, ALN, XLN}. 
When this point of view is taken, the term  ``Lagrangian multiform'' is more commonly used than  ``pluri-Lagrangian''. 

Here, we show that a slightly weaker property of the 2-form is a sufficient condition for a solution to the pluri-Lagrangian problem.

\begin{theorem}\label{thm-closed-implies-solution}
Consider a 2-form $\cL$ and a hierarchy of commuting PDEs
\begin{equation}\label{hierarchy} 
u_i = Q_i[u] \qquad i = 2,\ldots, N,
\end{equation}
with $Q_i:\cJ^\infty \rightarrow \R$.
If the exterior derivative of $\cL$ is constant up to a term that attains a double zero on solutions of
(\ref{hierarchy}), i.e.,\@ if 
\[ \d \cL = \gamma + \sum_{I,J} \sum_{i,j} \omega_{i,j}^{I,J} \D_I (u_i - Q_i) \D_J (u_j - Q_j) \]
for some $\cJ^\infty$-dependent 3-forms $\omega_{i,j}^{I,J}$ and a 3-form $\gamma$ that does not depend on $u$ or its derivatives, then all solutions $u: \R^N \rightarrow \R$ to the hierarchy $\eqref{hierarchy}$ also solve the pluri-Lagrangian problem for $\cL$.
\end{theorem}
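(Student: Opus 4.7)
My plan is a Stokes-type cobordism argument. Given a solution $u$ of \eqref{hierarchy}, a $2$-surface $\Gamma \subset \R^N$, and a variation $v$ whose derivatives all vanish on $\partial\Gamma$, I want to show $\pr(v\partial_u)\int_\Gamma \cL[u] = 0$. I would first compare this integral with the one over an auxiliary surface $\Gamma_0$ having the same boundary: if $V$ is a $3$-chain with $\partial V = \Gamma - \Gamma_0$, Stokes' theorem gives
\[ \int_\Gamma \cL - \int_{\Gamma_0} \cL = \int_V \d\cL. \]
Because $\pr(v\partial_u)$ is a vertical vector field on $\cJ^\infty$, it commutes with both $\d$ and the integration over the base, so applying it yields
\[ \pr(v\partial_u)\int_\Gamma \cL = \pr(v\partial_u)\int_{\Gamma_0} \cL + \int_V \pr(v\partial_u)(\d\cL). \]

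The heart of the argument is to show that the rightmost integrand vanishes at $u$. Writing the hypothesis as $\d\cL = \gamma + \sum \omega_{i,j}^{I,J}\,f_{i,I}\,g_{j,J}$ with $f_{i,I} := \D_I(u_i-Q_i)$ and $g_{j,J} := \D_J(u_j-Q_j)$, the term $\pr(v\partial_u)\gamma$ drops out since $\gamma$ is $u$-independent; on each remaining summand the Leibniz rule produces
\[ \pr(v\partial_u)(\omega_{i,j}^{I,J})\,f_{i,I}\,g_{j,J} + \omega_{i,j}^{I,J}\,\pr(v\partial_u)(f_{i,I})\,g_{j,J} + \omega_{i,j}^{I,J}\,f_{i,I}\,\pr(v\partial_u)(g_{j,J}), \]
where each contribution retains at least one undifferentiated factor $f_{i,I}$ or $g_{j,J}$. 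On a solution of \eqref{hierarchy} these factors are identically zero, so the entire integrand vanishes. This is precisely why the hypothesis is stated as a \emph{double} zero rather than a single one: it guarantees that not only $\d\cL$ itself, but also its linearization in $u$, vanishes on solutions.

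It remains to dispose of the $\Gamma_0$-term. I would choose $\Gamma_0$ to lie outside the (compact) support of $v$, so that $\pr(v\partial_u)\cL[u]$ is pointwise zero along $\Gamma_0$; the required $3$-chain $V$ then arises as the track of the homotopy, rel $\partial\Gamma$, that pushes $\Gamma$ off $\mathrm{supp}(v)$. Combined with the previous step this yields $\pr(v\partial_u)\int_\Gamma \cL = 0$, which is the pluri-Lagrangian property.

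The step I expect to be the main obstacle is this topological construction of $\Gamma_0$: it really uses $N \geq 3$ and some regularity of $\mathrm{supp}(v)$, so that $\partial\Gamma$ bounds a $2$-surface in the complement of the support. In the pluri-Lagrangian setup this is always available under mild hypotheses on $v$, and the $N = 2$ case collapses to the classical Euler-Lagrange equation for $L_{12}$ recalled in Section \ref{sec-Noether}. Everything else is routine: Stokes on a $3$-chain, commutativity of $\pr(v\partial_u)$ with $\d$ and with integration over the base, and the Leibniz rule.
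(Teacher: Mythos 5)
Your argument is correct and essentially reproduces the paper's own proof: both reduce the variation of the action over $\Gamma$ to the integral of $\pr(v\partial_u)\,\d\cL$ over a 3-chain via Stokes' theorem, and both observe that the Leibniz rule applied to the double-zero term always leaves an undifferentiated factor $\D_I(u_i-Q_i)$, which vanishes on solutions. The only cosmetic difference is the topological bookkeeping --- the paper localizes $v$ and takes $\Gamma$ inside the boundary of a 3-manifold, whereas you push $\Gamma$ off $\mathrm{supp}(v)$ through a cobordism --- and these amount to the same Stokes argument.
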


Strictly speaking, the assumption that the PDEs \eqref{hierarchy} commute can be dropped from this theorem. If they do not commute then there will usually be no non-trivial solutions $u: \R^N \rightarrow \R$ to all PDEs simultaneously, so in this case the theorem would be of very limited relevance.

\begin{proof}[Proof of Theorem \ref{thm-closed-implies-solution}]
Let $u$ be a solution to the hierarchy and $\Gamma = \partial B$ a surface defined as the boundary of a 3-manifold $B$. It is sufficient to show that the pluri-Lagrangian property holds on such surfaces. Indeed,
without loss of generality we can require variations to be supported on small open subsets and for any sufficiently small open subset $\Gamma'$ of a given surface, one can find a 3-manifold such that $\Gamma'$ is contained in its boundary.

As a consequence of the assumption on $\cL$ there holds for any variation $v: \R^N \rightarrow \R$ that
\[ \pr(v \partial_u) \d \cL[u] = \frac{\d}{\d \varepsilon}\bigg|_{\varepsilon=0} \d \cL[u+\varepsilon v] = 0. \]
Therefore
\[ \frac{\d}{\d \varepsilon}\bigg|_{\varepsilon=0} \int_\Gamma \cL[u+\varepsilon v]
= \frac{\d}{\d \varepsilon}\bigg|_{\varepsilon=0} \int_B d \cL[u+\varepsilon v]
= 0 ,\]
hence the action integral over any surface $\Gamma$ is critical with respect to variations of $u$.
\end{proof}

There are strong indications that the existence of a pluri-Lagrangian structure is deeply connected to integrability. One such indication comes from within the theory: the multi-time Euler-Lagrange equations are highly overdetermined. Hence if nontrivial solutions exist, then we are dealing with a system with remarkable properties. Other indications are connections to different notions of integrability, including Hamiltonian formulations \cite{S1,SV} and Lax pairs \cite{sleigh2018variational}, even though these connections have not yet been studied in full detail.

Despite some recent discoveries, relatively few examples of pluri-Lagrangian hierarchies of PDEs are known. To our knowledge, the list is limited to the potential KdV equation \cite{S2} and several related hierarchies obtained as continuum limits from lattice equations \cite{vermeeren2018variational,vermeeren2019continuum}, as well as (a matrix-valued generalization of) the AKNS system \cite{sleigh2018variational}. The goal of this paper is to establish a construction of a pluri-Lagrangian 2-form for a given hierarchy of $(1+1)$-dimensional PDEs, assuming we know classical Lagrange functions for the individual equations. Furthermore, we will assume that the vector field associated to each of the PDEs is a variational symmetry for the Lagrangians of the rest of the hierarchy. This assumption can be thought of as the Lagrangian analogue to commuting Hamiltonian flows.

\section{From variational symmetries to a pluri-Lagrangian 2-form}
\label{sec-construction}

We will take $t_1 = x$ to be the space coordinate. Then we can take the coefficients $L_{1j}$ of the pluri-Lagrangian 2-form \eqref{2form} to be classical Lagrangians  for the individual equations of the hierarchy. However, the coefficients $L_{ij}$ with $i,j>1$ do not have an interpretation in a classical variational principle. It is not obvious under which conditions suitable $L_{ij}$ exist, such that the given hierarchy solves the pluri-Lagrangian problem for the 
2-form. Below we will give an answer to this question for a large class of Lagrangians.

For a hierarchy of evolutionary equations,
\begin{equation}\label{hierarchy2} 
u_i = Q_i(u_1,u_{11},\ldots) \qquad i = 2,\ldots, N,
\end{equation}
it is a reasonable assumption that the corresponding Lagrangians do not contain second or higher derivatives with respect to the time variable. Similarly, we will assume that the Lagrangian does not contain products of time-derivatives. Suppose we have a family of Lagrangians $L_{1i}$ for $i = 2, \ldots, N$ satisfying these assumptions:
\begin{equation} \label{L1i}
     L_{1i}[u] = p(u,u_1,u_{11},\ldots) u_i - h_i(u,u_1,u_{11},\ldots).
\end{equation}
Here $p$ and $h$ are two arbitrary functions of their arguments. In particular the term $p(u,u_1,u_{11},\ldots) u_i$ plays the role of a kinetic energy.
Note that we are not including mixed derivatives, $u_{1i}, u_{11i},\ldots$. This does not restrict generality, because if a Lagrangian depends linearly on such derivatives, then we can integrate by parts to get an equivalent Lagrangian of the form \eqref{L1i}. Furthermore, note that the factor $p(u,u_1,u_{11},\ldots)$ in the kinetic term of $L_{1i}[u]$ is the same for all $i$. This is a direct consequence of the multi-time Euler-Lagrange equations of type \eqref{EL2}.

The Euler-Lagrange equations \eqref{varpde} of the Lagrangians \eqref{L1i} will not be evolutionary. Instead we assume that the Euler-Lagrange equations are differential consequences of the hierarchy \eqref{hierarchy2}, i.e., equations of the form
\[ \cE_p( u_i - Q_i(u_1,u_{11},\ldots) ) = 0 ,\]
where $\cE_p$ is some differential operator, depending on the kinetic term of the Lagrangians. In the case of the KdV hierarchy we have $\cE_p = \D_1$, see Example \ref{ex-KdV-Lag}.

Assume that the prolonged vector fields $\cD_i = \pr (Q_i \partial_u)$, corresponding to the equations of the hierarchy, commute pairwise and are variational symmetries of the $L_{1j}$:
\begin{equation}\label{L1j-varsym}
\cD_i L_{1j} = \D_1 A_{ij} + \D_j B_{ij}
\end{equation}
for some functions $A_{ij}, B_{ij}: \cJ^\infty \rightarrow \R$.
If we consider only those terms that contain a $t_j$-derivative, what remains of Equation \eqref{L1j-varsym} is of the form
\[ \cD_i(p u_j) = \D_1 \bar{A}_{ij}(u,u_1,u_j,\ldots) + \D_j B_{ij}(u,u_1,u_{11},\ldots)  \]
for some function $\bar{A}_{ij}: \cJ^\infty \rightarrow \R$. This is an algebraic identity (as opposed to an equality on solutions), hence we can replace $t_j$-derivatives by new dependent variables, e.g.\@ $u_j$ by a field denoted by $u_t$. We find
\[ \cD_i (p u_t) = \D_1 \bar{A}_{ij}(u,u_1,u_t,\ldots) + \D_k B_{ij}(u,u_1,u_{11},\ldots). \]
Since the left hand side of this equation is independent of $j$, we can choose $\bar{A}_{ij}$ and $B_{ij}$ independent of $j$ as well. In particular, we can write $B_{ij} = B_i$ and get
\begin{equation}
\cD_i L_{1j} = \D_1 A_{ij} + \D_j B_{i}.
\end{equation}

Note that $A_{ij}, B_i: \cJ^\infty \rightarrow \R$ are only defined up to a constant, hence we can choose them to be zero on the zero field: $A_{ij}[0] = B_i[0] = 0$.

\begin{lemma}
\label{lemma}
For Lagrangians of the form \eqref{L1i} with commuting variational symmetries \eqref{L1j-varsym}, there exist functions $F_{ij}: \cJ^\infty \rightarrow \R:[u] \mapsto F_{ij}(u,u_1,u_{11},\ldots)$, that do not depend on any time-derivatives, such that
\begin{equation}\label{Fij}
\D_1 F_{ij} = \D_i L_{1j} - \D_j L_{1i} 
\end{equation}
on solutions of the hierarchy \eqref{hierarchy2}.
\end{lemma}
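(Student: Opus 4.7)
The plan is to exploit the variational symmetry identity $\cD_i L_{1j} = \D_1 A_{ij} + \D_j B_i$ and its $i\leftrightarrow j$ partner, subtracting to get an algebraic identity whose restriction to the hierarchy is the desired total $\D_1$-derivative.

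The key preliminary observation is that the variational symmetry identity itself already forces $\cD_i h_j$ to be $\D_1$-exact. Indeed, $L_{1j}=pu_j-h_j$ is linear in the $t_j$-jet variables $u_j, u_{1j},\ldots$, and $\cD_i L_{1j}$ preserves this linearity because $\cD_i u_j = \D_j Q_i$ is itself linear in those variables; $\D_j B_i$ is also linear in them since $B_i$ depends on $x$-derivatives only. Hence $A_{ij}$ may be taken linear in the $t_j$-derivatives, and equating the $t_j$-derivative-free components on the two sides of the symmetry identity yields $-\cD_i h_j = \D_1 A_{ij}^{(0)}$, where $A_{ij}^{(0)}$ denotes the $t_j$-derivative-free part of $A_{ij}$.

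Subtracting the $i\leftrightarrow j$ partner of the symmetry identity gives
\[
\cD_i L_{1j} - \cD_j L_{1i} = \D_1(A_{ij}-A_{ji}) + (\D_j B_i - \D_i B_j).
\]
On solutions $\D_k$ and $\cD_k$ act identically on jets (both act on any $\cJ^\infty$-function by $u_{It_k}\mapsto \D_I Q_k$), so after this substitution the left-hand side becomes $\D_i L_{1j}-\D_j L_{1i}$ while the right-hand side becomes $\D_1(\hat A_{ij}-\hat A_{ji}) + (\cD_j B_i - \cD_i B_j)$, with $\hat A_{ij}$ the reduction of $A_{ij}$ to an $x$-derivative function. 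Equivalently, the reduced left-hand side $\cD_i\hat L_{1j}-\cD_j\hat L_{1i}$, with $\hat L_{1j}=pQ_j-h_j$, must be shown to be $\D_1$-exact. Using the commutativity $\cD_i Q_j=\cD_j Q_i$, the $p\cdot\cD Q$ contributions cancel and we are left with $[(\cD_i p)Q_j-(\cD_j p)Q_i] - [\cD_i h_j-\cD_j h_i]$.

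The standard integration-by-parts identities $f\,\ell_p g\equiv g\,\ell_p^* f$ and $\cD_Q f\equiv Q\,\delta f/\delta u \pmod{\D_1}$, combined with the Euler-Lagrange relation $\omega_p Q_k=\delta h_k/\delta u$ (forced by $\delta L_{1k}/\delta u=\omega_p(u_k-Q_k)$ with $\omega_p:=\ell_p^*-\ell_p$) and the skew-adjointness $\omega_p^*=-\omega_p$, rewrite the two brackets above modulo $\D_1$-exact terms as $Q_i\omega_p Q_j$ and $2Q_i\omega_p Q_j$ respectively. Their difference $-Q_i\omega_p Q_j \equiv -\cD_i h_j$ is $\D_1$-exact by the preliminary observation, so the required $F_{ij}$ exists and can be assembled from $\hat A_{ij}-\hat A_{ji}$, $A_{ij}^{(0)}$, and the accumulated integration-by-parts remainders. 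The main obstacle is the integration-by-parts bookkeeping: arranging for the commutativity of the flows and the skew-adjointness of $\omega_p$ to be invoked exactly where needed so that only the $\D_1$-exact expression $\cD_i h_j$ survives.
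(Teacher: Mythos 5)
Your proof is correct, but it follows a genuinely different route from the paper's. The paper proves the lemma by applying the commutator $[\cD_i,\cD_j]$ to a \emph{third} Lagrangian $L_{1k}$ with $k\neq i,j$, integrating over $t_1$ to eliminate the $\D_1 A$-terms, and concluding that $\cD_i B_j-\cD_j B_i$ is a null Lagrangian, hence $\D_1$-exact by the exactness of the variational complex (Olver, Theorem 4.7); the resulting antiderivative $G_{ij}$ combines with $A_{ij}-A_{ji}$ to give $F_{ij}$. You stay entirely within the pair $(i,j)$: you reduce to the spatial jet on solutions, use commutativity only in the form $\cD_iQ_j=\cD_jQ_i$, and exploit the explicit kinetic structure $\delta(pu_k)/\delta u=\omega_p u_k$ with $\omega_p=\ell_p^*-\ell_p$ skew-adjoint, the relation $\delta h_k/\delta u=\omega_p Q_k$, and the (correct, and consistent with the paper's discussion preceding the lemma) observation that the $t_j$-derivative-free component of the symmetry identity gives $\cD_i h_j=-\D_1 A_{ij}^{(0)}$ exactly. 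The trade-offs are as follows. Your argument relies on the standing assumption that the Euler--Lagrange equations are differential consequences of the hierarchy, which is what forces $\delta h_k/\delta u=\omega_p Q_k$; this assumption appears in the section's setup but not in the lemma's stated hypotheses, and the paper's proof does not use it. In exchange, your argument does not require the existence of a third member of the hierarchy (the paper's proof needs some $L_{1k}$, $k\neq i,j$, which for a minimal three-time setting means tacitly extending the hierarchy), and all of your $\D_1$-antiderivatives are explicit integration-by-parts remainders, so you never invoke the null-Lagrangian exactness theorem. One simplification you missed: once $Q_i\omega_p Q_j\equiv 0 \pmod{\D_1}$ is established from the exact $\D_1$-exactness of $\cD_i h_j$, the $h$-terms are already handled exactly by $\D_1 A_{ij}^{(0)}-\D_1 A_{ji}^{(0)}$, so only the bracket $(\cD_i p)Q_j-(\cD_j p)Q_i$ needs the adjoint identity and the final bookkeeping is lighter than your closing sentence suggests.
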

\begin{proof}
Since the variational symmetries $\cD_i = \pr (Q_i \partial_u)$ commute, we have for any $k \neq i,j$
\begin{align*}
0 &= [\cD_{i}, \cD_{j}] L_{1k} \\
&= \D_1 \left( \cD_{i} A_{jk} - \cD_{j} A_{ik} \right) + \D_k \left( \cD_{i} B_j - \cD_{j} B_i \right) .
\end{align*}
Now let $u$ be an arbitrary compactly supported smooth field. Then
\begin{align*}
0 &= \int_{-\infty}^\infty \D_1 \left( \cD_{i} A_{jk} - \cD_{j} A_{ik} \right) + \D_k \left( \cD_{i} B_j - \cD_{j} B_i \right) \d t_1 \\ 
&= \int_{-\infty}^\infty \D_k \left( \cD_{i} B_j - \cD_{j} B_i \right) \d t_1  \\ 
&= \D_k  \int_{-\infty}^\infty \left( \cD_{i} B_j - \cD_{j} B_i \right) \d t_1 .
\end{align*}
Since $u$ and in particular its $t_k$-derivatives are arbitrary, it follows that $\cD_{i} B_j - \cD_{j} B_i$ is a null Lagrangian. This implies (see e.g.\@ \cite[Theorem 4.7]{Olver}) that there exists a function $G_{ij}:\cJ^\infty \rightarrow \R$ such that
\[ \cD_{i} B_j - \cD_{j} B_i = \D_1(G_{ij}). \]
Hence with $F_{ij} = G_{ij} + A_{ij} - A_{ji}$ we find that, on solutions of the hierarchy \eqref{hierarchy2},
\begin{align*}
\cD_{i} L_{1j} - \cD_{j} L_{1i}
&=  \D_1 A_{ij} + \D_j B_i - \D_1 A_{ji} - \D_i B_j \\
&= \D_1 A_{ij} + \cD_{j} B_i - \D_1 A_{ji} - \cD_{i} B_j \\
&= \D_1 F_{ij}. 
\end{align*}
Since we are working on solutions of the equations the hierarchy, we can use those equations to eliminate time-derivatives from $F_{ij}$, hence we can assume it depends on the jet as $F_{ij}(u,u_1,u_{11},\ldots)$.
\end{proof}

We now present our main result, which is the analogue in 2-dimensional field theory of Theorem 10 in \cite{petrera2017variational}.

\begin{theorem}
\label{thm-construction}
Assume we have Lagrangians of the form \eqref{L1i} with commuting variational symmetries \eqref{L1j-varsym}. Let
\begin{align}
L_{ij}[u] 
&= \sum_{\alpha \geq 0} \var{1j}{L_{1j}}{u_{t_1^{\alpha+1}}} \D_1^\alpha (u_i - Q_i) - \sum_{\alpha \geq 0} \var{1i}{L_{1i}}{u_{t_1^{\alpha+1}}} \D_1^\alpha (u_j - Q_j) + F_{ij}(u,u_1,u_{11},\ldots) , 
\label{Lij}
\end{align}
where $F_{ij}: \cJ^\infty \rightarrow \R$ is as in Lemma \ref{lemma} and the operator $\var{ij}{}{}$ is the variational derivative from Equation \eqref{varder-general}.
Then every solution of the hierarchy \eqref{hierarchy2} is a critical point of
\[ \cL[u] = \sum_{i<j} L_{ij}[u] \,\d t_i \wedge \d t_j \]
in the pluri-Lagrangian sense.
\end{theorem}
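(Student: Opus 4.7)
The plan is to invoke Theorem~\ref{thm-closed-implies-solution}: I would show that $\d\cL$ equals a constant $3$-form (in fact zero) plus a term quadratic in the ``errors'' $E_a := u_a - Q_a$, from which the theorem immediately concludes that every solution of the hierarchy is critical for the pluri-Lagrangian action. Writing $\d\cL = \sum_{i<j<k} c_{ijk}\,\d t_i\wedge \d t_j\wedge \d t_k$ with $c_{ijk} = \D_i L_{jk} - \D_j L_{ik} + \D_k L_{ij}$, the task becomes showing that each $c_{ijk}$ attains a double zero on solutions of \eqref{hierarchy2}.

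The workhorse identity is a Noether-style integration by parts, obtained by replicating the computation in the proof of Theorem~\ref{thm-noether} with $Q$ replaced by an arbitrary jet function $f$ and simplified using the structure~\eqref{L1i} (which kills every mixed variational derivative of $L_{1k}$):
\[
\D_1\!\left(\sum_{\alpha\ge 0}\var{1k}{L_{1k}}{u_{t_1^{\alpha+1}}}\D_1^\alpha f\right) = \pr(f\partial_u)L_{1k} - f\var{1k}{L_{1k}}{u} - \D_k(p f).
\]
For the main case $i=1<j<k$ I would apply this with $f=E_j$ and $f=E_k$, observing $\pr(E_a\partial_u) = \D_a - \cD_a$, then subtract and invoke Lemma~\ref{lemma} (in the form $\cD_j L_{1k} - \cD_k L_{1j} = \D_1 F_{jk}$) to cancel the $F_{jk}$ contribution. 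After this, $c_{1jk}$ collapses to
\[
-E_j\var{1k}{L_{1k}}{u} + E_k\var{1j}{L_{1j}}{u} + \D_j(pE_k) - \D_k(pE_j) + (\text{off-shell correction}).
\]
The first two summands are genuine double zeros since $\var{1k}{L_{1k}}{u} = \cE_p E_k$ is a differential consequence of the hierarchy. For the remaining terms the key identity is $(\D_j - \cD_j)E_k = (\D_k - \cD_k)E_j$, which follows from $\cD_j u_k = \D_k Q_j$ and commutativity $\cD_j Q_k = \cD_k Q_j$; it forces $p(\D_j E_k - \D_k E_j) = p(\cD_j E_k - \cD_k E_j)$, and combined with the expansion $\D_a p = \cD_a p + \sum_I(\D_I E_a)\partial_{u_I} p$, every apparent single-zero contribution assembles into a product of two errors. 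The remaining case $1<i<j<k$ is treated analogously by expanding all three coefficients from~\eqref{Lij}, with the extra inputs $[\cD_i,\D_j]=0$ (immediate from $\cD_i = \pr(Q_i\partial_u)$) and the derived identity $\cD_j A_{ik} - \cD_i A_{jk} = \D_k G_{ij}$ obtained from $[\cD_i,\cD_j]L_{1k}=0$.

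The main obstacle is the intricate bookkeeping: each coefficient $c_{ijk}$ first produces many single-zero contributions, and the cancellations into genuine double zeros are far from manifest. Three structural features drive the simplification: the common kinetic factor $p$ shared by all $L_{1i}$ (forced by the multi-time Euler-Lagrange equations of type~\eqref{EL2}); the factorization $\var{1j}{L_{1j}}{u} = \cE_p E_j$, which converts products $E_i \var{1j}{L_{1j}}{u}$ into honest double zeros; and the commutativity identities $[\cD_i,\cD_j]=0$ and $[\cD_i,\D_j]=0$ for the symmetry flows, which supply the decisive algebraic relations $(\D_i - \cD_i)E_j = (\D_j - \cD_j)E_i$ and $\cD_j A_{ik} - \cD_i A_{jk} = \D_k G_{ij}$ needed for the recombination.
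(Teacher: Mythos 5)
Your strategy coincides with the paper's: verify the hypothesis of Theorem~\ref{thm-closed-implies-solution} coefficient by coefficient. For the components $c_{1jk}=\D_1L_{jk}-\D_jL_{1k}+\D_kL_{1j}$ your argument is essentially the paper's computation, just repackaged: your ``workhorse identity'' is exactly the integration by parts from the proof of Theorem~\ref{thm-noether} specialized to Lagrangians of the form \eqref{L1i}, and applying it with $f=E_j,E_k$, using $\pr(E_a\partial_u)=\D_a-\cD_a$ and Lemma~\ref{lemma}, reproduces Equations \eqref{closedness-1}--\eqref{closedness-5}; this part is sound and arguably cleaner than the paper's presentation. The genuine divergence is in the purely temporal components $1<i<j<k$. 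You propose to expand all three coefficients from \eqref{Lij} directly and recombine using $[\cD_i,\cD_j]=0$, $[\cD_i,\D_j]=0$ and $\cD_jA_{ik}-\cD_iA_{jk}=\D_kG_{ij}$, but you do not carry this out, and this is by far the heaviest piece of bookkeeping (in particular the combination $\D_iF_{jk}-\D_jF_{ik}+\D_kF_{ij}$, where the $F$'s are only characterized through $\D_1F_{jk}$ on solutions). The paper sidesteps this entirely: it applies $\D_k$, $\D_j$, $\D_i$ to the three instances of the already-established identity \eqref{closedness-5}, takes the alternating combination to obtain
\[
\D_1\bigl(\D_kL_{ij}-\D_jL_{ik}+\D_iL_{jk}\bigr)\equiv 0 ,
\]
and then uses autonomy of the coefficients to conclude that this expression is a \emph{constant} modulo double zeros --- which is precisely why Theorem~\ref{thm-closed-implies-solution} is formulated with a constant $3$-form $\gamma$ rather than demanding $\d\cL\equiv 0$. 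Your direct route, if completed, would prove the stronger statement that the constant vanishes, but as written it is an unexecuted ``analogous'' claim rather than a proof; either carry out that expansion in full or adopt the paper's differentiation-and-integration trick to close this case.
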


\begin{proof}
We show that $\cL$ is almost-closed in the sense of Theorem \ref{thm-closed-implies-solution}. We start by calculating $\D_1 L_{ij}$. We have:
\begin{align*}
\D_1 & \left(\sum_{\alpha \geq 0} \var{1j}{L_{1j}}{u_{t_1^{\alpha+1}}} \D_1^\alpha (u_i - Q_i) \right) 
\\
&= \sum_{\alpha \geq 0}\D_1 \var{1j}{L_{1j}}{u_{t_1^{\alpha+1}}} \D_1^\alpha (u_i - Q_i) + \sum_{\alpha \geq 0} \var{1j}{L_{1j}}{u_{t_1^{\alpha+1}}} \D_1^{\alpha+1} (u_i - Q_i) \\
&= \sum_{\alpha \geq 0} \left( \D_1 \var{1j}{L_{1j}}{u_{t_1^{\alpha+1}}} + \var{1j}{L_{1j}}{u_{t_1^\alpha}} \right) \D_1^\alpha (u_i - Q_i)  - \var{1j}{L_{1j}}{u} \D_1 (u_i - Q_i)  \\
&= \sum_{\alpha \geq 0} \left( \der{L_{1j}}{u_{t_1^\alpha}} - \D_j \var{1j}{L_{1j}}{u_{t_1^{\alpha} t_j}} - \D_1 \D_j \var{1j}{L_{1j}}{u_{t_1^{\alpha+1} t_j}} \right) \D_1^\alpha (u_i - Q_i) - \var{1j}{L_{1j}}{u} \D_1 (u_i - Q_i) .
\end{align*}
Since $L_{1j}$ does not depend on any mixed derivatives $u_{t_1^{\alpha+1} t_j}$, this simplifies to
\begin{align*}
\D_1 & \left(\sum_{\alpha \geq 0} \var{1j}{L_{1j}}{u_{t_1^{\alpha+1}}} \D_1^\alpha (u_i - Q_i) \right) 
\\
&= \sum_{\alpha \geq 0} \der{L_{1j}}{u_{t_1^\alpha}} \D_1^\alpha (u_i - Q_i) - \D_j \var{1j}{L_{1j}}{u_{t_j}} (u_i - Q_i) - \var{1j}{L_{1j}}{u} \D_1 (u_i - Q_i) \\
&\equiv \sum_{\alpha \geq 0} \der{L_{1j}}{u_{t_1^\alpha}} \D_1^\alpha (u_i - Q_i) - (\D_j p) (u_i - Q_i),
\end{align*}
where $\equiv$ denotes equality modulo double zeros. Similarly, there holds
\[ \D_1 \left( \sum_{\alpha \geq 0} \var{1i}{L_{1i}}{u_{t_1^{\alpha+1}}} \D_1^\alpha (u_j - Q_j) \right) 
\equiv \sum_{\alpha \geq 0} \der{L_{1i}}{u_{t_1^\alpha}} \D_1^\alpha (u_j - Q_j) - (\D_i p) (u_j - Q_j) . \]
Hence
\begin{align*}
 \D_1 L_{ij} 
&\equiv \sum_{\alpha \geq 0} \der{L_{1j}}{u_{t_1^\alpha}} \D_1^\alpha (u_i - Q_i) 
- \sum_{\alpha \geq 0} \der{L_{1i}}{u_{t_1^\alpha}} \D_1^\alpha (u_j - Q_j) \\
&\quad - (\D_j p) (u_i - Q_i) + (\D_i p) (u_j - Q_j) + \D_1 F_{ij}.
\end{align*}
Using the assumption that the Lagrangians $L_{1i}$ and $L_{1j}$ are of the form \eqref{L1i}, we can write
\begin{align*}
&\D_i L_{1j} - \cD_{i} L_{1j} = p \D_j (u_i - Q_i) + \sum_{\alpha \geq 0} \der{L_{1j}}{u_{t_1^\alpha}} \D_1^\alpha (u_i - Q_i) \\
&\D_j L_{1i} - \cD_{j} L_{1i} = p \D_i (u_j - Q_j) + \sum_{\alpha \geq 0} \der{L_{1i}}{u_{t_1^\alpha}} \D_1^\alpha (u_j - Q_j) ,
\end{align*}
where $\cD_i = \pr(Q_i \partial_u)$ and $\cD_j = \pr(Q_j \partial_u)$.
Hence
\begin{equation}\label{closedness-1}
\begin{split}
\D_1 L_{ij} - \D_i L_{1j} + \D_j L_{1i} 
&\equiv - \cD_{i} L_{1j} + \cD_{j} L_{1i} - (\D_j p) (u_i - Q_i) - p \D_j (u_i - Q_i) \\
&\quad + (\D_i p) (u_j - Q_j) + p \D_i (u_j - Q_j) + \D_1 F_{ij} .
\end{split}
\end{equation}
By definition of $F_{ij}$ we have that $\D_1 F_{ij} - \cD_{i} L_{1j} + \cD_{j} L_{1i} = 0$ on solutions of \eqref{hierarchy2}. 
Furthermore, the only time derivatives in this expression come from the kinetic parts $p u_i$ and $p u_j$ of the Lagrangians. Therefore,
\begin{align}
\D_1 F_{ij} &- \cD_{i} L_{1j} + \cD_{j} L_{1i} \notag\\
&= - \cD_{i} (p u_j - p Q_j) + \cD_{j} (p u_i - p Q_i) \notag\\
&= - p \cD_{i} (u_j - Q_j) - (\cD_{i} p) (u_j - Q_j) + p \cD_{j} (u_i - Q_i) + (\cD_{j} p) (u_i - Q_i) \notag\\
&\equiv - p \D_i (u_j - Q_j) - (\D_i p) (u_j - Q_j) + p \D_j (u_i - Q_i) + (\D_j p) (u_i - Q_i)  . \label{closedness-4}
\end{align}
Combining Equations \eqref{closedness-1} and \eqref{closedness-4} gives
\begin{equation}\label{closedness-5}
\D_1 L_{ij} - \D_i L_{1j} + \D_j L_{1i} \equiv 0.
\end{equation}

Consider three copies of Equation \eqref{closedness-5}, each with an additional differentiation:
\begin{align*}
&\D_k ( \D_1 L_{ij} - \D_i L_{1j} + \D_j L_{1i} ) \equiv 0, \\
&\D_j ( \D_1 L_{ik} - \D_i L_{1k} + \D_k L_{1i} ) \equiv 0, \\
&\D_i ( \D_1 L_{jk} - \D_j L_{1k} + \D_k L_{1j} ) \equiv 0.
\end{align*}
A linear combination of these three equations gives us
\[ \D_1 ( \D_k L_{ij} - \D_j L_{ij} + \D_i L_{jk} ) \equiv 0. \]
Since all coefficients are autonomous, this implies that 
\begin{equation}\label{closedness-6}
\D_k L_{ij} - \D_j L_{ij} + \D_i L_{jk} \equiv \text{const} .
\end{equation}
Equations \eqref{closedness-5} and \eqref{closedness-6} together imply that $\cL$ fulfills the conditions of Theorem \ref{thm-closed-implies-solution}, hence every solution of the hierarchy \eqref{hierarchy2} is a critical point of the pluri-Lagrangian problem for $\cL$.
\end{proof}

Theorem \ref{thm-construction} and its proof are formulated for scalar systems, but they can be extended to the case of multicomponent systems. If $u = (u^1,\ldots,u^\ell)$ satisfies the equations $u^k_i = Q^k_i$, we construct the Lagrangian coefficients by 
\begin{align*}
L_{ij}[u]
&= \sum_{k = 1}^\ell \sum_{\alpha \geq 0} \var{1j}{L_{1j}}{u^k_{t_1^{\alpha+1}}} \D_1^\alpha \big( u^k_i - Q^k_i \big) - \sum_{k = 1}^\ell \sum_{\alpha \geq 0} \var{1i}{L_{1i}}{u^k_{t_1^{\alpha+1}}} \D_1^\alpha \big( u^k_j - Q^k_j \big) + F_{ij}(u,u_1,u_{11},\ldots) .
\end{align*}

\section{Examples}
\label{sec-examples}

In this last Section we discuss three examples. For the first one, the potential Korteweg-de Vries hierarchy, a pluri-Lagrangian structure is known in the literature \cite{SV}. Our discussion illustrates that this structure can be obtained using Theorem \ref{thm-construction}. The second example is the Nonlinear Schrödinger (NLS) hierarchy. Its pluri-Lagrangian structure can be considered as a special case of the one for the AKNS hierarchy obtained in \cite{sleigh2018variational}. The final example is the system consisting of the sine-Gordon and modified KdV equations, which indicates that the construction of Theorem \ref{thm-construction} can be adapted to non-evolutionary equations.

The calculations in this Section were performed in the SageMath software system \cite{SAGE}. The code is available at \cite{code}.

\subsection{Potential KdV hierarchy}

We start with our running example of the Korteweg-de Vries equation. The potential KdV hierarchy was the first complete hierarchy of PDEs for which a pluri-Lagrangian structure was found \cite{S2}. Here we show that this structure can also be derived using Theorem \ref{thm-construction}. We present only a minimal example consisting of just the first two equations in the hierarchy,
\begin{align}
u_2 &= 3 u_{1}^{2} + u_{111} , \label{kdv2} \\
u_3 &= 10 u_{1}^{3} + 5 u_{11}^{2} + 10 u_{1} u_{111} + u_{11111}. \label{kdv3}
\end{align}
The corresponding Lagrangians are
\begin{align*}
L_{12}[u] &= \frac{1}{2} u_{1} u_{2} -  u_{1}^{3} - \frac{1}{2} u_{1} u_{111}, \\
L_{13}[u] &= \frac{1}{2} u_{1} u_{3} - \frac{5}{2} u_{1}^{4} + 5 u_{1} u_{11}^{2} - \frac{1}{2} u_{111}^{2},
\end{align*}
and have as their Euler-Lagrange equations
\begin{align}
&\D_1( u_2 - (3 u_{1}^{2} + u_{111})) = 0, \label{kdv2-el} \\
&\D_1( u_3 - (10 u_{1}^{3} + 5 u_{11}^{2} + 10 u_{1} u_{111} + u_{11111})) = 0. \label{kdv3-el} 
\end{align}
On solutions of the evolutionary equations, there holds
\begin{align*}
\D_2 L_{13} - \D_3 L_{12} 
&= -10 u_{1}^{3} u_{12} + 10 u_{1} u_{11} u_{112} + 5 u_{11}^{2} u_{12} + 3 u_{1}^{2} u_{13} - uu_{111} u_{1112} + \frac{1}{2} u_{1} u_{1113} \\
&\qquad + \frac{1}{2} u_{111} u_{13} - \frac{1}{2} u_{13} u_{2} + \frac{1}{2} u_{12} u_{3} \\
&= 15 u_{1}^{4} u_{11} + 135 u_{1} u_{11}^{3} + 210 u_{1}^{2} u_{11} u_{111} + 25 u_{1}^{3} u_{1111} - 18u_{11} u_{111}^{2} \\
&\qquad + \frac{15}{2} u_{11}^{2} u_{1111} + 34 u_{1} u_{111} u_{1111} + 33 u_{1}u_{11} u_{11111} + \frac{13}{2} u_{1}^{2} u_{111111} \\
&\qquad + \frac{1}{2} u_{1111} u_{11111} - u_{111} u_{111111} + \frac{1}{2} u_{1} u_{11111111} .
\end{align*}
Integrating this gives us
\begin{align*}
F_{23}(u,u_1,u_{11},\ldots) &= 3 u_{1}^{5} + \frac{135}{2} u_{1}^{2} u_{11}^{2} + 25 u_{1}^{3} u_{111} - \frac{25}{2} u_{11}^{2} u_{111} + 7 u_{1} u_{111}^{2} + 20 u_{1} u_{11} u_{1111} \\
&\quad + \frac{13}{2} u_{1}^{2} u_{11111} + \frac{1}{2} u_{1111}^{2} - \frac{1}{2} u_{111} u_{11111} - \frac{1}{2} u_{11} u_{111111} + \frac{1}{2} u_{1} u_{1111111}.
\end{align*}
Let $Q_2$ and $Q_3$ be the right hand sides of Equations \eqref{kdv2} and \eqref{kdv3} . Then the remaining terms in Equation \eqref{Lij} are
\begin{align*}
&\var{13}{L_{13}}{u_1} ( u_2 - Q_2) = \left( \frac{1}{2} u_{3} - 10 u_{1}^{3} - 5 u_{11}^{2} - 10 u_{1} u_{111} - u_{11111}) \right) ( u_2 - 3 u_{1}^{2} - u_{111}), \\
&\var{13}{L_{13}}{u_{11}} \D_1 ( u_2 - Q_2) = (10 u_{1} u_{11} - u_{1111}) ( u_{12} - 6 u_{1} u_{11} - u_{1111}), \\
&\var{13}{L_{13}}{u_{111}} \D_{11} ( u_2 - Q_2) = u_{111} ( u_{112} - 6 u_{1} u_{111} - 6 u_{11}^2 - u_{11111}),
\end{align*}
and
\begin{align*}
&-\var{12}{L_{12}}{u_1} ( u_3 - Q_3) = -\left( \frac{1}{2} u_{2} - 3 u_{1}^{2} - u_{111} \right) ( u_3 - 10 u_{1}^{3} - 5 u_{11}^{2} - 10 u_{1} u_{111} - u_{11111}), \\
&-\var{12}{L_{12}}{u_{11}} \D_1 ( u_3 - Q_3) = -\frac{1}{2} u_{11} ( u_{13} - 30 u_{1}^{2} u_{11} - 20u_{11} u_{111} - 10 u_{1} u_{1111} - u_{111111}), \\
&-\var{12}{L_{12}}{u_{111}} \D_{11} ( u_3 - Q_3) \\
&\quad = \frac{1}{2} u_{1} ( u_{113} - 60 u_{1} u_{11}^{2} - 30 u_{1}^{2} u_{111} - 20 u_{111}^{2} - 30 u_{11} u_{1111} - 10 u_{1} u_{11111} - u_{1111111})).
\end{align*}
Adding everything together, as in Equation \eqref{Lij} of Theorem \ref{thm-construction}, we find
\begin{align*}
L_{23}[u] &= 3 u_{1}^{5} - \frac{15}{2} u_{1}^{2} u_{11}^{2} + 10 u_{1}^{3} u_{111} - 5 u_{1}^{3} u_{2} + \frac{7}{2} u_{11}^{2} u_{111} + 3 u_{1} u_{111}^{2} \\
&\quad - 6 u_{1} u_{11} u_{1111} + \frac{3}{2} u_{1}^{2} u_{11111}  + 10 u_{1} u_{11} u_{12} - \frac{5}{2} u_{11}^{2}u_{2} - 5 u_{1} u_{111} u_{2} \\
&\quad + \frac{3}{2} u_{1}^{2} u_{3} - \frac{1}{2} u_{1111}^{2} + \frac{1}{2} u_{111} u_{11111} - u_{111} u_{112} + \frac{1}{2} u_{1} u_{113} \\
&\quad + u_{1111} u_{12} - \frac{1}{2} u_{11} u_{13} - \frac{1}{2} u_{11111} u_{2} + \frac{1}{2} u_{111} u_{3} .
\end{align*}

Note that the classical Euler-Lagrange equations 
\[ \var{12}{L_{12}}{u} = 0 \qquad \text{and} \qquad \var{13}{L_{13}}{u} = 0 \]
yield Equations \eqref{kdv2-el}--\eqref{kdv3-el}, which are the $t_1$-derivatives of the potential KdV equations \eqref{kdv2}--\eqref{kdv3}. However, the multi-time Euler-Lagrange equations also contain the potential KdV equations themselves:
\begin{align*}
\var{12}{L_{12}}{u_1} = -\var{23}{L_{23}}{u_3} \qquad \Rightarrow \qquad&
\frac{1}{2} u_2 - 3 u_1^2 - u_{111} = -\frac{3}{2} u_1^2 - \frac{1}{2} u_{111}, \\
\var{13}{L_{13}}{u_1} = \var{23}{L_{23}}{u_2} \qquad \Rightarrow \qquad&
\frac{1}{2} u_3 - 10 u_1^3 - 5 u_{11}^2 - 10 u_1 u_{111} - u_{11111} \\
&\quad = - 5 u_1^3 - \frac{5}{2} u_{11}^2 - 5 u_1 u_{111} - \frac{1}{2} u_{11111}.
\end{align*}

\subsection{Nonlinear Schrödinger hierarchy}

The nonlinear Schrödinger equation is one of the most prominent integrable PDEs \cite{faddeev2007hamiltonian, kaup1978exact}. The corresponding hierarchy is discussed for example in \cite{olver1996trihamiltonian, anco2017integrable}. It is a special case of the AKNS hierarchy, the pluri-Lagrangian of which is studied in \cite{sleigh2018variational}. Here we construct a pluri-Lagrangian structure for the NLS hierarchy using Theorem \ref{thm-construction}.

In this example we consider a complex field $u: \R^N \rightarrow \C$. The first two equations of the hierarchy are the nonlinear Schr\"odinger equation itself and the complex modified KdV equation,
\begin{align}
u_{2} &= i u_{11} - 2 i |u|^2 u , \label{nls2} \\
u_{3} &= u_{111} - 6 |u|^2 u_{1} .  \label{nls3}
\end{align}
Fields $u$ that solve both these equations and their complex conjugates are critical fields for the Lagrangians (see e.g.\@ \cite{avan2016lagrangian})
\begin{align*}
L_{12}[u] &= \frac{i}{2} \left( u_{2} \bar u_{} -  u_{} \bar u_{2} \right) - |u_{1}|^2 - |u|^4 , \\
L_{13}[u] &=  \frac{i}{2} \left( u_{3} \bar u - u_{} \bar u_{3} \right) + \frac{i}{2} \left( u_{11} \bar u_{1} - u_{1} \bar u_{11} \right) + \frac{3i}{2} |u|^2 \left( u_{1} \bar u -  u \bar u_{1} \right) .
\end{align*}

For these Lagrangians Lemma \ref{lemma} gives us the function
\[ 
F_{23}(u,u_1,u_{11},\ldots) = 2 |u|^6 - \frac{3}{2} |u|^2 \left( u_{11} \bar u - u \bar u_{11} \right) - 6 |u u_1|^2  + \frac{1}{2} \left( u_{111} \bar u_{1} + u_{1} \bar u_{111} \right) + |u_{11}|^2 
\]
and Theorem \ref{thm-construction} provides the coefficient
\begin{align*}
L_{23}[u] &= -4 |u|^6 - u_{1}^2 \bar u_{}^2 - u_{}^2 \bar u_{1}^2 + 2 |u u_{1}|^2 + 2 |u|^2 \left( u_{11} \bar u + u \bar u_{11} \right) + \frac{3}{2} i |u|^2 \left( u_{2} \bar u - u \bar u_{2} \right) \\
&\quad + \frac{i}{2} \left( u_{12} \bar u_{1} - u_{1} \bar u_{12} \right) + u_{3} \bar u_{1} + u_{1} \bar u_{3} - |u_{11}|^2 + i \left( u_{11} \bar u_{2} - u_{2} \bar u_{11} \right) 
\end{align*}
of a pluri-Lagrangian 2-form \[\cL[u] = L_{12}[u] \,\d t_1 \wedge \d t_2 + L_{13}[u] \,\d t_1 \wedge \d t_3 + L_{23}[u] \,\d t_2 \wedge \d t_3.\]

Interestingly, in this example the classical Euler-Lagrange equations 
\[ \var{12}{L_{12}}{u} = 0 \qquad \text{and} \qquad \var{13}{L_{13}}{u} = 0 \]
already yield the evolutionary form of the NLS equations \eqref{nls2}--\eqref{nls3}. All other multi-time Euler-Lagrange equations, in particular those of the form
\[ 
\var{12}{L_{12}}{u_1} = -\var{23}{L_{23}}{u_3} \qquad \text{and} \qquad
\var{13}{L_{13}}{u_1} = \var{23}{L_{23}}{u_2} \]
are trivially satisfied.

\subsection{Sine-Gordon equation and modified KdV hierarchy}

Consider the sine-Gordon equation
\[ u_{12} = \sin u \]
and the (potential) modified KdV hierarchy
\begin{align*}
u_3 &= u_{111} + \frac{1}{2}u_1^3 , \\
u_4 &= \frac{3}{8} u_1^5 + \frac{5}{2} u_1 u_{11}^2 + \frac{5}{2} u_1^2 u_{111} + u_{11111} , \\
&\vdotswithin{=}
\end{align*}
This hierarchy consists of symmetries of the sine-Gordon equation (see, e.g.\@ \cite{olver1977evolution} or \cite[Section 5k]{newell1985solitons}).
The corresponding Lagrangians are
\begin{align*}
L_{12}[u] &= \frac{1}{2} u_{1} u_{2} -  \cos u , \\
L_{13}[u] &= \frac{1}{2} u_{1} u_{3} - \frac{1}{8} u_{1}^{4} + \frac{1}{2} u_{11}^{2}, \\
L_{14}[u] &= \frac{1}{2} u_{1} u_{4} - \frac{1}{16} u_{1}^6 - \frac{5}{12} u_{1}^3 u_{111} - \frac{1}{2} u_{111}^2 , \\
&\vdotswithin{=}
\end{align*}
Since the sine-Gordon equation is not evolutionary, Theorem \ref{thm-construction} does not apply to this hierarchy. Surprisingly, a naive adaptation of the construction leads to a suitable 2-form, at least for the first few equations of the hierarchy.

We start the construction of a pluri-Lagrangian 2-form in three dimensions, considering only $t_1$, $t_2$ and $t_3$.
Let $Q_3 = u_{111} + \frac{1}{2}u_1^3$. Then on solutions of the equations, there holds
\begin{align*}
\D_2 L_{13} - \D_3 L_{12} 
&= \frac{1}{2} u_{12} u_3 - \frac{1}{2} u_1^3 u_{12} + u_{11} u_{112} - \frac{1}{2} u_{13} u_2 - u_3 \sin u \\
&= -\frac{1}{2} u_{12} Q_3 - \frac{1}{2} u_2 \D_1 Q_3 - \frac{1}{2} u_1^3 \sin u + u_{11} u_1 \cos u \\
&= \D_1 F_{23}
\end{align*}
for
\[ F_{23}[u] =  -\frac{1}{2} u_2 \left(u_{111} + \frac{1}{2} u_1^3 \right) + \frac{1}{2} u_1^2 \cos u . \]
Since there is no evolutionary equation for $u_2$, we tolerate the dependence of $F_{23}$ on this derivative. For the same reason, the term $\var{12}{L_{12}}{u_{1^{\alpha+1}}} \D_1^\alpha (u_2 - Q_2)$ in Equation \eqref{Lij} only makes sense for $\alpha > 0$. For $\alpha = 0$ we just remove it. We are left with
\begin{align*}
L_{23}[u] &= \var{13}{L_{13}}{u_{11}} (u_{12} - \sin u) -  \var{12}{L_{12}}{u_{1}} (u_{3} - u_{111} - \frac{1}{2}u_1^3) + F_{23}[u] \\
&= u_{11} (u_{12} - \sin u) - \frac{1}{2} u_2 (u_{3} - u_{111} - \frac{1}{2}u_1^3)  -\frac{1}{2} u_2 \left(u_{111} + \frac{1}{2} u_1^3 \right) + \frac{1}{2} u_1^2 \cos u \\
&= u_{11} (u_{12} - \sin u) - \frac{1}{2} u_2 u_3 + \frac{1}{2} u_1^2 \cos u .
\end{align*}
This pluri-Lagrangian structure in $\R^3$ was first found in \cite{S2}, but a pluri-Lagrangian structure incorporating more equations of the hierarchy has not been given previously. With the method presented here, such an extension is obtained by a straightforward (but long) calculation. For example, we can calculate $F_{24}$ and $F_{34}$ analogously to $F_{23}$ above. This in turn allow us to calculate the coefficients of the Lagrangian 2-form,
\begin{align*}
    L_{24}[u] &= \frac{3}{8} u_{1}^4 \cos u - \frac{5}{12} u_{1}^3 u_{112} + \frac{5}{4} u_{1}^2 u_{11} u_{12} - \frac{3}{2} u_{1}^2 u_{11} \sin u  - \frac{1}{2} u_{11}^2 \cos u \\
    &\quad + u_{1} u_{111} \cos u - u_{111} u_{112} + u_{1111} u_{12} - \frac{1}{2} u_{2} u_{4} - u_{1111} \sin u
\end{align*}
and 
\begin{align*}
    L_{34}[u] &= \frac{3}{128} u_{1}^8 - \frac{5}{16} u_{1}^4 u_{11}^2 + \frac{7}{16} u_{1}^5 u_{111} - \frac{3}{16} u_{1}^5 u_{3} - \frac{1}{8} u_{11}^4 + \frac{7}{4} u_{1} u_{11}^2 u_{111} \\
    &\quad + \frac{3}{4} u_{1}^2 u_{111}^2 - \frac{3}{2} u_{1}^2 u_{11} u_{1111} + \frac{1}{4} u_{1}^3 u_{11111} - \frac{5}{12} u_{1}^3 u_{113} + \frac{5}{4} u_{1}^2 u_{11} u_{13}  \\
    &\quad - \frac{5}{4} u_{1} u_{11}^2 u_{3} - \frac{5}{4} u_{1}^2 u_{111} u_{3} + \frac{1}{4} u_{1}^3 u_{4} - \frac{1}{2} u_{1111}^2 + \frac{1}{2} u_{111} u_{11111}  \\
    &\quad - u_{111} u_{113} + u_{1111} u_{13} - u_{11} u_{14} - \frac{1}{2} u_{11111} u_{3} + \frac{1}{2} u_{111} u_{4} .
\end{align*}

The presented hierarchy can be extended to a doubly-infinite hierarchy, where the sine-Gordon equation connects two copies of the modified KdV hierarchy, one as stated above and one where $t_2$ is used as space variable. The calculations presented here can be easily extended to cover both sides of the hierarchy. A pluri-Lagrangian structure of this double hierarchy was previously obtained using a carefully chosen continuum limit \cite{vermeeren2018variational}. 

In this example, a straightforward adaptation of Equation \eqref{Lij} gives us suitable coefficients $L_{ij}$. However, there does not seem to be a simple generalization of the proof we gave for Theorem \ref{thm-construction} to cover this case. In this example we have verified by direct calculation that the multi-time Euler-Lagrange equations consist of the Sine-Gordon and modified KdV equations and differential consequences thereof. Showing the validity of our construction in a more general setting, ideally with a more conceptual proof, is a goal for future research.

\section{Conclusions}

We have shown that a hierarchy of 2-dimensional variational PDEs, that are variational symmetries of each other, possesses a pluri-Lagrangian  structure. This extends the results of \cite{petrera2017variational}, where a similar result was obtained for variational ODEs. The existence of a hierarchy of variational symmetries for a PDE is closely related to its integrability. Hence our result contributes significantly to the evidence that pluri-Lagrangian structures are a fundamental feature of integrability. Furthermore, our construction can be used to obtain new examples of pluri-Lagrangian 2-forms, as we illustrated in the context of the nonlinear Schrödinger hierarchy.

As illustrated by the example of the Sine-Gordon and mKdV equations, our construction applies more generally than the proof we provided. More research is needed to determine the most general form of the ideas presented here. Relevant to this line of investigation is the paper \cite{sleigh2019variational}, which deals with the same topics as the present work (and appeared on the arXiv one day after it).

\section*{Acknowledgments}

The authors are grateful to Yuri Suris for helpful discussions and feedback on a draft of this manuscript. 

The authors are partly supported by DFG (Deutsche Forschungsgemeinschaft) in the frame of SFB/TRR 109 ``Discretization in Geometry and Dynamics''.



\begin{thebibliography}{99}

\bibitem{anco2017integrable}
S. C. Anco, F. Mobasheramini,
{\em Integrable $U(1)$-invariant peakon equations from the NLS hierarchy},
Physica D 355, 1--23 (2017).


\bibitem{ALN}
J.~Atkinson, S.B.~Lobb, F.W.~Nijhoff, 
{\em An integrable multicomponent quad-equation and its
Lagrangian formulation}, 
Theor. Math. Phys. 173, No. 3, 1644--1653 (2012); translation from Teor. Mat. Fiz. 173, No. 3, 363--374 (2012).

\bibitem{avan2016lagrangian}
J. Avan, V. Caudrelier, A. Doikou, A. Kundu,
{\em Lagrangian and Hamiltonian structures in an integrable hierarchy and space-time duality},
Nucl. Phys., B 902, 415--439 (2016).



\bibitem{BS}
A.I. Bobenko, Yu.B. Suris,
{\em Discrete pluriharmonic functions as solutions of linear pluri-Lagrangian systems}, 
Commun. Math. Phys. 336, No. 1, 199--215 (2015).


\bibitem{BPS1}
R. Boll, M. Petrera, Yu.B. Suris, 
{\em Multi-time Lagrangian 1-forms for families of B\"acklund transformations. Toda-type systems},
J. Phys. A, Math. Theor. 46, No. 27, Article ID 275204, 26 p. (2013).


\bibitem{BPS2}
R. Boll, M. Petrera, Yu.B. Suris, 
{\em What is integrability of discrete variational systems?},
Proc. R. Soc. Lond., Ser. A, Math. Phys. Eng. Sci. 470, No. 2162, Article ID 20130550, 15 p. (2014).

\bibitem{BPS3}
R. Boll, M. Petrera, Yu.B. Suris, 
{\em Multi-time Lagrangian 1-forms for families of B\"acklund transformations. Relativistic Toda-type systems},
J. Phys. A, Math. Theor. 48, No. 8, Article ID 085203, 28 p. (2015).

\bibitem{BPS4}
R. Boll, M. Petrera, Yu.B. Suris, 
{\em On the variational interpretation of the discrete KP equation},
in: Advances in discrete differential geometry. Berlin: Springer. 379--405 (2016).

\bibitem{BPS5}
R. Boll, M. Petrera, Yu.B. Suris, 
{\em On integrability of discrete variational systems: octahedron relations},
Int. Math. Res. Not. 2016, No. 3, 645--668 (2016).


\bibitem{dickey2003soliton}
{L. A. Dickey}, 
{\em Soliton equations and Hamiltonian systems}, Singapore: World Scientific (1991).

\bibitem{faddeev2007hamiltonian}
L. D. Faddeev and L. A. Takhtajan,
{\em Hamiltonian methods in the theory of solitons},
Transl. from the Russian by A. G. Reyman. Reprint of the 1987 original. Berlin: Springer (2007).

\bibitem{kaup1978exact}
D. J. Kaup, A. C. Newell
{\em An exact solution for a derivative nonlinear Schrödinger equation},
J. Math. Phys. 19, 798--801 (1978)


\bibitem{LN1}
S. Lobb, F.W. Nijhoff,
\textit{Lagrangian multiforms and multidimensional consistency},
J. Phys. A, Math. Theor. 42, No. 45, Article ID 454013, 18 p. (2009).



\bibitem{LN2}
S.B. Lobb, F.W. Nijhoff,
{\em Lagrangian multiform structure for the lattice Gel'fand-Dikij hierarchy},
J. Phys. A, Math. Theor. 43, No. 7, Article ID 072003, 11 p. (2010).


\bibitem{LN3}
S.B. Lobb, F.W. Nijhoff,
{\em A variational principle for discrete integrable systems},
SIGMA, Symmetry Integrability Geom. Methods Appl. 14, Paper 041, 18 p. (2018).



\bibitem{LNQ}
S.B. Lobb, F.W. Nijhoff, G.R.W. Quispel,
{\em Lagrangian multiform structure for the lattice KP system}, 
J. Phys. A, Math. Theor. 42, No. 47, Article ID 472002, 11 p. (2009).

\bibitem{newell1985solitons}
{A. C. Newell}, 
{\em Solitons in mathematics and physics}, Philadelphia, PA: Society for Industrial and Applied Mathematics (SIAM) (1985).

\bibitem{N}
E. Noether,
{\em Invariante Variationsprobleme},
Nachr. Ges. Wiss. G\"ottingen, Math.-Phys. Kl. 1918, 235--257 (1918).


\bibitem{olver1977evolution}
P.J. Olver, 
{\em Evolution equations possessing infinitely many symmetries},
J. Math. Phys. 18, No. 6, 1212--1215 (1977).

\bibitem{Olver}
P.J. Olver,
{\em Applications of Lie groups to differential equations},
2nd ed. New York: Springer-Verlag (1993).

\bibitem{olver1996trihamiltonian}
P.J. Olver, P. Rosenau,
{\em Tri-Hamiltonian duality between solitons and solitary-wave solutions having compact support},
Phys. Rev. E 53, 1900--1906 (1996).


\bibitem{petrera2017variational}
M. Petrera, Yu. B. Suris,
{\em Variational symmetries and pluri-Lagrangian systems in classical mechanics},
J. Nonlinear Math. Phys. 24:sup1, 121--145 (2017).

\bibitem{SAGE}
SageMath, the Sage Mathematics Software System (Version 8.1), \url{https://www.sagemath.org} (2017).

\bibitem{sleigh2018variational}
D. G. Sleigh, F. W. Nijhoff, V. Caudrelier, 
{\em A Variational Approach to Lax Representations},
Journal of Geometry and Physics, 142, 66--79 (2019)

\bibitem{sleigh2019variational}
D. G. Sleigh, F. W. Nijhoff, V. Caudrelier, 
{\em Variational symmetries and Lagrangian multiforms}, arXiv:1906.05084 (2019).

\bibitem{S1}
Yu.B.~Suris,
{\em Variational formulation of commuting Hamiltonian flows: multi-time Lagrangian 1-forms},
J. Geom. Mech. 5, No. 3, 365--379 (2013).


\bibitem{S2}
Yu.B. Suris,
{\em Variational symmetries and pluri-Lagrangian systems},
in: Dynamical systems, number theory and applications. A Festschrift in honor of Armin Leutbecher's 80th birthday. Hackensack, NJ: World Scientific. 255--266 (2016).


\bibitem{SV}
Yu.B. Suris, M. Vermeeren, 
{\em On the Lagrangian structure of integrable hierarchies},
in: Advances in discrete differential geometry. Berlin: Springer. 347--378 (2016).

\bibitem{vermeeren2018variational}
M. Vermeeren,
{\em A variational perspective on continuum limits of ABS and lattice GD equations},
SIGMA, Symmetry Integrability Geom. Methods Appl. 15, Paper 044, 35 p. (2019).

\bibitem{vermeeren2019continuum}
M. Vermeeren,
{\em Continuum limits of pluri-Lagrangian systems},
Journal of Integrable Systems 4, No. 1, Article ID xyy020 (2019).

\bibitem{code}
M. Vermeeren, {\em Support code for ``Variational symmetries and pluri-Lagrangian structures for integrable hierarchies of PDEs''}, \href{https://doi.org/10.5281/zenodo.3243313}{DOI: 10.5281/zenodo.3243313}

\bibitem{XLN}
P. Xenitidis, S. Lobb, F. Nijhoff, 
{\em On the Lagrangian formulation of multidimensionally consistent systems},
Proc. R. Soc. Lond., Ser. A, Math. Phys. Eng. Sci. 467, No. 2135, 3295--3317 (2011).



\bibitem{YLN} S. Yoo-Kong, S. Lobb, F. Nijhoff,
{\em Discrete-time Calogero-Moser system and Lagrangian 1-form structure},
J. Phys. A, Math. Theor. 44, No. 36, Article ID 365203, 39 p. (2011).





\end{thebibliography}
\end{document}